\newtheorem{theorem}{\textbf{Theorem}}
\newtheorem{assumption}{\textbf{Assumption}}
\newtheorem{lemma}{\textbf{Lemma}}
\newtheorem{corollary}{\textbf{Corollary}}
\newtheorem{remark}{\textbf{Remark}}
\renewcommand\footnoterule{\kern-3pt \hrule width 2in \kern 2.6pt}
\title{\bf  \LARGE Dimensional-invariance principles in coupled dynamical systems:  A unified analysis and applications}
\author{Zhiyong Sun and Changbin Yu
\thanks{This work was supported by
the Australian Research Council under grant
 DP130103610 and DP160104500.}
\thanks{Z. Sun and C. Yu are with
Research School of Engineering, The Australian National University, Canberra ACT
2601, Australia. C. Yu is also with  Hangzhou Dianzi University, Hangzhou, China. Emails: \ {\tt \small  sun.zhiyong.cn@gmail.com, brad.yu@anu.edu.au}  }
}
\begin{document}
\maketitle
\thispagestyle{empty}
\pagestyle{empty}

\begin{abstract}
In this paper we study coupled dynamical systems and {\color{blue} investigate  dimension properties of  the subspace spanned by solutions of each individual system.   Relevant problems on \textit{collinear dynamical systems} and their variations are discussed recently by Montenbruck et. al. in \cite{collinear2017ACC}, while in this paper we aim to provide a unified analysis to derive the dimensional-invariance principles for networked coupled systems, and to generalize the invariance principles for networked systems with more general forms of coupling terms}. To be specific, we consider two types of coupled systems, one with scalar couplings and the other with matrix couplings. Via the \textit{rank-preserving flow theory}, {\color{blue} we show that any scalar-coupled dynamical system (with constant, time-varying or state-dependent couplings)} possesses the  dimensional-invariance principles, in that the dimension of the subspace spanned by the individual systems' solutions remains invariant.   For coupled dynamical systems with matrix coefficients/couplings,  necessary and sufficient conditions {\color{blue} (for constant, time-varying and state-dependent couplings) } are given to characterize dimensional-invariance principles. {\color{blue}The proofs via a rank-preserving matrix flow theory in this paper simplify the analysis in \cite{collinear2017ACC}, and we also extend  the invariance principles to the  cases of time-varying couplings and state-dependent couplings. Furthermore, subspace-preserving property and signature-preserving flows are also developed for coupled networked systems with particular coupling terms. } These invariance principles provide insightful characterizations to analyze  transient behaviors and solution evolutions  for a large family of coupled systems, such as multi-agent consensus dynamics, distributed coordination systems, formation control systems, among others.
\end{abstract}


\section{Introduction}
In this paper we consider the following coupled dynamical systems consisting of $n$ individual systems
\begin{align} \label{eq:general_system}
\dot x_i(t) = \sum_{i =1}^{n} \kappa_{ij}(t) x_j(t),
\end{align}
where $x_i \in \mathbb{R}^d$ is the state for system $i$, $\kappa_{ij}$ is  a coupling weight scalar or matrix between systems $j$ and $i$ (when $i = j$, then it is a coefficient scalar/matrix $\kappa_{ii}$ for system $i$). The coupling/coefficient weights  could be constant,   time-varying, or state-dependent.
The system \eqref{eq:general_system} serves as a very general model to describe many different types of coupled/networked systems, such as formation control systems \cite{anderson2014counting, krick2009stabilisation, sun2016exponential}, network computation systems \cite{costello2014degree, costello2015global},  multi-agent consensus dynamics \cite{olfati2004consensus}, \cite{ren2007information}.

Coupled dynamical systems are often operated in a networked manner, where each individual system interacts with other systems to perform a global or common task. Networked control systems  in the general model \eqref{eq:general_system} have been attracting increasing attention in the recent decade and can be found in a variety of applications.  Depending on the actual control task, the coupling terms can be designed to reflect  information flows between spatially distributed systems,  communication requirements or constraints, or cooperative  interactions incorporating local tasks to achieve a global task \cite{ren2007information}.

{\color{blue} This paper builds on recent work of \cite{sun2015rigid} and \cite{collinear2017ACC}, and aims to identify several invariance principles (some were developed in \cite{collinear2017ACC}) for the solutions of each   system  arising from different couplings and interactions between individual systems.
Invariance principles for distance-based formation systems \cite{krick2009stabilisation}
(which can be described as a special case of \eqref{eq:general_system}) were discussed in \cite{sun2015rigid}, which show that all individual agents' solutions span a linear subspace with a constant dimension over time. A more recent paper \cite{collinear2017ACC} provided a comprehensive study on this invariance principle for coupled linear systems, by starting with the collinear dynamical systems with constant coefficients and couplings. }
In this paper, we discuss two types of coupled dynamical systems that can be represented by \eqref{eq:general_system}, one with scalar couplings, and the other with matrix couplings, respectively. These principles relate to the invariance of the dimensions of the subspaces spanned by the solutions of each individual system, which are thus termed as \emph{dimensional-invariance principles}. 
{\color{blue} We aim to provide a unified analysis to establish such invariance principles, by using a different approach from \textit{rank-preserving matrix flow theory} which simplifies the analysis in  \cite{collinear2017ACC}. Furthermore, as compared to \cite{collinear2017ACC} which discussed networked systems with constant couplings,  we also provide several generalizations of the dimensional-invariance principle  to more general coupling terms (that include time-varying and state-dependent couplings) and to more elaborated invariance principles such as subspace-preserving property and signature-preserving principles (their definitions will become clear in the context in later sections).   }


These invariance principles   are fundamental yet universal properties for coupled dynamical systems. We note that in most papers on coupled/networked control systems, the focus has been on the stability and convergence analysis, while \emph{transient behaviors} are largely ignored. The results revealed from the dimensional-invariance principles provide us with additional insights on the transient behaviors and evolutions of all individual solutions, and could  assist the convergence and stability analysis of the overall coupled dynamical systems.
An example is the distance-based formation control system   described by gradient flows from  potential functions of interest, which show that an initially collinear formation remains collinear for all time under such flows (see e.g. \cite{krick2009stabilisation}).

The invariance principles also provide feasible coupling conditions to guarantee that  the solutions of individual systems are constrained in some smaller dimensional spaces, which could find particular applications in several practical scenarios. For example,   \emph{collinear} solutions of a coupled dynamical system are of particular interests. In \cite{anderson2014counting}, a line formation, in which individual systems' states are confined in a 1-D subspace, is studied with insights to more general formations on other dimensions. As another example, for a coupled dynamical system that describes the coordination of  multiple mobile antennas, collinear solutions have practical significance to align   directions of all antennas in a single line \cite{stutzman2012antenna}. Motivated by these practical applications, the theory of collinear dynamical systems was studied in \cite{collinear2017ACC}. The dimensional-invariance principles, {\color{blue} established in \cite{sun2015rigid} (for multi-agent formation systems) and \cite{collinear2017ACC} (for constant couplings) and generalized in this paper (for general couplings)}, will provide an insightful framework to facilitate these applications.

The remaining parts of this paper are structured as follows. In Section  \ref{sec:scalar-weighted}, we prove that the  solutions of scalar-coupled dynamical systems have the dimensional-invariance (and furthermore, subspace-preserving) principles, {\color{blue}which thus generalizes the results in \cite{collinear2017ACC}}.  In Section  \ref{sec:matrix-weighted}, matrix-coupled dynamical systems are discussed, for which necessary and sufficient conditions are given to guarantee the dimensional-invariance principle. {\color{blue} The subtle difference between the subspace-preserving property and (the more general) dimensional-invariance principle is also elaborated in this section.} Applications of the invariance principles in {\color{blue} general} formation control systems are shown in Section  \ref{sec:application_formation}. Section   \ref{sec:conclusions} presents the conclusions of this paper. In the appendix sections, we present preliminary background on rank-preserving flows, some extensions and proofs, and a brief review of several popular networked dynamical systems that fit in the general model \eqref{eq:general_system}.

\subsection{Assumptions and solution issues of \eqref{eq:general_system}}
To address the solution issue of the coupled dynamical system \eqref{eq:general_system}, we impose the following mild assumption.
\begin{assumption}
The coefficient/coupling terms $\kappa_{ij}$ are continuous scalar/matrix functions.
\end{assumption}

The above mild assumption guarantees the existence and uniqueness of the  solutions for system \eqref{eq:general_system} \cite[Chapter 1.2]{brockett1970finite}.  Note that we do not impose additional assumptions on $\kappa_{ij}$'s. They can be constant, time-varying, state-dependent or other general continuous functions. {\color{blue} Note also that the system \eqref{eq:general_system} can be  a coupled \textit{time-invariant linear} system (when $\kappa_{ij}$ is constant),  a  coupled \textit{time-varying linear} system (when $\kappa_{ij}(t)$ is time-varying), or a coupled \emph{nonlinear} system (when the coupling term $\kappa_{ij}$  depends on the state $x$). For example,  in multi-agent formation and swarm control, the coupling term $\kappa_{ij}$ is usually a function of system states $x$, written as $\kappa_{ij}(x)$. (See detailed expressions in Table \ref{table_wij} in the Appendix.)   An example on distance-based formation control systems described by \eqref{eq:general_system} in which $\kappa_{ij}$ is a function of $x$ will be discussed in Section  \ref{sec:application_formation}. }

%
%

\section{Coupled dynamical systems with scalar-weighted couplings} \label{sec:scalar-weighted}
Consider the following coupled dynamical systems with scalar couplings
\begin{align}  \label{eq:coupled_linear_wij}
\dot x_i(t) = \sum_{i =1}^{n} w_{ij}(t) x_j(t),
\end{align}
where $w_{ij}$ is a scalar (constant or time-varying) coupling weight between agents $j$ and $i$. Note that we do not require $w_{ij} = w_{ji}$, i.e., the coupling weight could be \emph{asymmetric}.

\subsection{Main results}
In this section we show that the coupled dynamical system \eqref{eq:coupled_linear_wij} has the following dimensional-invariance principle.
\begin{theorem} \label{thm_rankpreserving_scalar}
The coupled dynamical system  \eqref{eq:coupled_linear_wij} has the dimensional-invariance principle in the sense that
\begin{empheq}[box=\fbox]{align} \label{eq:rankpreserving_sense}
\text{rank}(X(t)) = r, (r \leq d), & \,\,\,\, \forall t \geq 0,  \nonumber \\
&  \text{if} \,\,\,\,  \text{rank}(X(0)) = r.
\end{empheq}
where $X = [x_1, x_2, \ldots, x_n] \in \mathbb{R}^{d \times n}$.
\end{theorem}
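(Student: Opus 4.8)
The plan is to express the matrix evolution $\dot X(t)$ as a right-multiplication of $X(t)$ by a matrix built from the coupling weights, and then invoke the rank-preserving matrix flow theory. Writing the stacked state $X = [x_1, \ldots, x_n] \in \mathbb{R}^{d \times n}$, the scalar equations $\dot x_i = \sum_j w_{ij} x_j$ say that each column of $\dot X$ is a linear combination of the columns of $X$ with coefficients $w_{ij}$. First I would collect these coefficients into the matrix $W(t) = [w_{ij}(t)] \in \mathbb{R}^{n \times n}$ and verify the matrix identity $\dot X(t) = X(t) W(t)^{\top}$ (the transpose appears because the index $j$ being summed is the column index of $X$, so the coupling matrix acts on the right). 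Under Assumption~1 the entries $w_{ij}(t)$ are continuous, so $W(t)^{\top}$ is a continuous matrix-valued function of $t$ and the linear flow is well defined.

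Next I would solve this matrix differential equation explicitly. Because $\dot X = X W^{\top}$ is a linear time-varying matrix ODE acting from the right, its solution is $X(t) = X(0)\,\Phi(t)$, where $\Phi(t)$ is the state-transition (fundamental) matrix of the scalar system $\dot \Phi = \Phi\, W(t)^{\top}$ with $\Phi(0) = I_n$. The decisive structural fact is that $\Phi(t)$ is \emph{nonsingular} for every $t \geq 0$: its determinant satisfies the Abel--Liouville formula $\det \Phi(t) = \exp\!\big(\int_0^t \operatorname{tr} W(s)^{\top}\, ds\big) \neq 0$, so $\Phi(t)$ is invertible at all times. Right-multiplication of $X(0)$ by an invertible matrix preserves the column rank, hence $\operatorname{rank}(X(t)) = \operatorname{rank}(X(0)\,\Phi(t)) = \operatorname{rank}(X(0)) = r$ for all $t \geq 0$. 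This is exactly the claimed dimensional-invariance principle, and it makes no use of symmetry of $W$, covering the asymmetric case; it likewise accommodates constant, time-varying, and (via existence/uniqueness of solutions) state-dependent couplings, since along any fixed trajectory the coupling matrix reduces to a continuous function of $t$.

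I expect the main obstacle to be formulating the argument so that it applies uniformly to the state-dependent case rather than only to genuinely linear couplings. When $w_{ij} = w_{ij}(x)$ the equation $\dot X = X W^{\top}$ is nonlinear, and $\Phi(t)$ is no longer a fundamental matrix of an a~priori linear system; one must first fix the (unique, by Assumption~1) solution trajectory $X(\cdot)$, evaluate $W$ along it to obtain a continuous function $t \mapsto W(X(t))^{\top}$, and only \emph{then} treat $\dot X = X\, W(X(t))^{\top}$ as a linear time-varying equation whose transition matrix is invertible. The rank-preserving flow theory recalled in the appendix is precisely designed to handle this freezing-along-the-trajectory step, so the cleanest route is to phrase the whole proof as a direct application of that theory to the flow $\dot X = X A(t)$ with $A(t) = W(\cdot)^{\top}$ continuous, rather than re-deriving the transition matrix by hand. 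A minor technical point to state carefully is that it is the \emph{column} rank that is preserved and that right-multiplication, not left, is the correct operation, since a careless transpose would instead preserve row rank and obscure that the relevant subspace is $\operatorname{span}\{x_1(t),\ldots,x_n(t)\} \subseteq \mathbb{R}^d$.
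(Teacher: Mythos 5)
Your proof is correct, and the reduction to the matrix flow $\dot X(t) = X(t)W^{\top}(t)$ is exactly the paper's first step. Where you diverge is in how you finish: the paper concludes by invoking the rank-preserving flow lemma of Helmke and Moore (Lemma~\ref{lemma_rank}, applied with $A(t)=0$ and $B(t)=W^{\top}(t)$), which rests on a tangent-space argument on the manifold of fixed-rank matrices, whereas you solve the equation explicitly as $X(t)=X(0)\Phi(t)$ with $\Phi$ the state-transition matrix of $\dot\Phi = \Phi W^{\top}(t)$ and use the Abel--Liouville formula to see that $\Phi(t)$ is nonsingular. Your route is more elementary and self-contained, and it actually proves more than the theorem asks for: right-multiplication by an invertible matrix preserves not just the rank but the column space, so you have in effect established the subspace-preserving property of Corollary~\ref{corollary_subspace_preserving} (the paper proves that separately via Lemma~\ref{lemma_subspace}, whose proof is precisely your transition-matrix argument). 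What the paper's choice buys instead is uniformity: the same Lemma~\ref{lemma_rank} with a nonzero left factor $A(t)$ is reused for the matrix-coupled case of Theorem~\ref{thm:rank_preserving_matrix}, where the explicit-solution argument no longer yields subspace preservation and only the rank is invariant. Your handling of the state-dependent case --- freezing $W$ along the unique trajectory guaranteed by Assumption~1 and then treating the equation as linear time-varying --- is the right way to make the argument rigorous and matches what the paper does implicitly.
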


\begin{proof}
Define the composite vector $x = [x_1^{\top}, x_2^{\top}, \ldots, x_n^{\top}]^{\top} \in \mathbb{R}^{dn}$. In order to obtain a compact form of the system $\dot x$, we define the matrix $W(t) = \{w_{ij}(t)\} \in \mathbb{R}^{n \times n}$. Therefore, a compact form of \eqref{eq:coupled_linear_wij} can be  written as
\begin{align}  \label{eq:coupled_linear_wij_compact}
\dot x(t) = (W(t) \otimes I_d) x(t),
\end{align}
where $\otimes$ denotes the Kronecker product. The vector differential equation \eqref{eq:coupled_linear_wij_compact} on the real vector space $\mathbb{R}^{dn}$ can be stated equivalently  as the following differential  flow on the matrix space $\mathbb{R}^{d \times n}$ (without involving  the Kronecker product term)
\begin{align}  \label{eq:compact_matrix_wij}
\dot X(t) = X(t) W^{\top}(t).
\end{align}
Since the solution of \eqref{eq:coupled_linear_wij} is well defined, the existence and uniqueness of the solution to \eqref{eq:compact_matrix_wij} is also well guaranteed.
Then according to Lemma \ref{lemma_rank} (in Appendix I), the rank-preserving property  of the matrix flow \eqref{eq:compact_matrix_wij} follows   by observing $B(t) = W^{\top}(t) $ and $A(t) = 0$, which implies the dimensional-invariance property of the solutions to \eqref{eq:coupled_linear_wij} in the sense of \eqref{eq:rankpreserving_sense}.
\end{proof}

{\color{blue}A similar result on the invariance principle of \eqref{eq:coupled_linear_wij} with constant couplings $\omega_{ij}$ was established in \cite{collinear2017ACC} with a different proof, while Theorem~\ref{thm_rankpreserving_scalar} has provided a more general result that  also extends to time-varying couplings $\omega_{ij}(t)$. } We now show a stronger result, that if initial conditions are chosen from some   subspace, then the solutions of the coupled system \eqref{eq:coupled_linear_wij} will always be in that   subspace.
\begin{corollary} \label{corollary_subspace_preserving}
In addition to the rank-invariance principle proved in Theorem~\ref{thm_rankpreserving_scalar}, the solutions of the coupled dynamical systems \eqref{eq:coupled_linear_wij} are subspace-preserving   in the sense that
\begin{empheq}[box=\fbox]{align}  \label{eq:subspacepreserving_sense}
& \text{span}([x_1(t), x_2(t), \ldots, x_n(t)])  \nonumber \\
& = \text{span}([x_1(0), x_2(0), \ldots, x_n(0)]) \,\,\,\, \forall t \geq 0.
\end{empheq}
\end{corollary}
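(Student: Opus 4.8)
The plan is to build directly on the matrix flow $\dot X(t) = X(t) W^\top(t)$ derived in the proof of Theorem~\ref{thm_rankpreserving_scalar}, and to show that its solution is obtained from the initial data $X(0)$ by right multiplication with a nonsingular matrix. Once this factored form is available, the subspace-preserving identity \eqref{eq:subspacepreserving_sense} reduces to the elementary fact that right multiplication by an invertible matrix leaves the column space (that is, the span of the columns) unchanged.

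First I would introduce the state transition matrix $\Phi(t,0) \in \mathbb{R}^{n \times n}$ of the linear (time-varying) system $\dot v(t) = W(t)\,v(t)$, that is, the unique solution of $\dot\Phi(t,0) = W(t)\Phi(t,0)$ with $\Phi(0,0) = I_n$. Transposing the matrix flow yields $\dot X^\top(t) = W(t) X^\top(t)$, so that each column of $X^\top(t)$ (equivalently, each row of $X(t)$) is propagated by the same transition matrix. Hence $X^\top(t) = \Phi(t,0) X^\top(0)$, which transposed back reads
\begin{align}
X(t) = X(0)\,\Phi^\top(t,0), \qquad \forall t \geq 0. \nonumber
\end{align}

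Next I would record that $\Phi(t,0)$ is nonsingular for every $t$, a standard property of state transition matrices of linear ordinary differential equations. To keep the argument self-contained, one may invoke the Abel--Liouville formula: $\det\Phi(t,0)$ obeys the scalar equation $\tfrac{d}{dt}\det\Phi(t,0) = \mathrm{tr}(W(t))\,\det\Phi(t,0)$ with $\det\Phi(0,0) = 1$, whence $\det\Phi(t,0) = \exp\!\big(\int_0^t \mathrm{tr}(W(s))\,ds\big) \neq 0$. Therefore $M(t) := \Phi^\top(t,0)$ is invertible for all $t \geq 0$.

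Finally, from $X(t) = X(0) M(t)$ with $M(t)$ invertible, every column of $X(t)$ is a linear combination of the columns of $X(0)$, giving $\mathrm{span}(X(t)) \subseteq \mathrm{span}(X(0))$; the reverse inclusion follows identically from $X(0) = X(t) M(t)^{-1}$, which establishes \eqref{eq:subspacepreserving_sense}. I expect the only genuine subtlety to be the nonsingularity of $\Phi(t,0)$, as the remaining steps are routine linear algebra. I would also remark that the argument carries over verbatim to state-dependent couplings, since along any fixed solution $x(t)$ the matrix $W(t) := W(x(t))$ is a continuous function of $t$ and the flow \eqref{eq:compact_matrix_wij} remains linear in $X$.
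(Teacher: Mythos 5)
Your proposal is correct and follows essentially the same route as the paper: the paper reduces the corollary to its Lemma~\ref{lemma_subspace}, whose proof likewise writes $X(t) = X(0)\,\Phi_{B^{\top}(t)}^{\top}(t,0)$ with a nonsingular state transition matrix and concludes that the column space is preserved. The only difference is that you inline that lemma and justify nonsingularity explicitly via the Abel--Liouville formula, where the paper simply cites the standard property of transition matrices.
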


\begin{proof}
The expression of the system equation $\dot X$ in \eqref{eq:compact_matrix_wij} satisfies the matrix differential equation in \eqref{eq:lemma_subspace} with $W(t): = B^\top(t)$. Therefore, {\color{blue} the statement follows from} Lemma \ref{lemma_subspace} (in Appendix II).
\end{proof}
\subsection{Interpretations and implications}
The system \eqref{eq:coupled_linear_wij} describes a very general form of coupled dynamical systems which  encompass  many control systems that have been actively studied in the literature. Examples include the distributed system for networked function computation   \cite{costello2014degree}, \cite{costello2015global},
multi-agent consensus  systems initialled in \cite{olfati2004consensus} (undirected graphs), and developed in e.g. \cite{ren2007information} (directed graphs) and \cite{martin2013continuous} (time-varying couplings), and distributed formation control systems \cite{anderson2014counting, krick2009stabilisation, sun2016exponential}. The results established in this section indicate that, the solutions for individual systems coupled in the form \eqref{eq:coupled_linear_wij} will span a subspace of the same dimension as that spanned by initial conditions, and solutions will be constrained in that subspace over time.

The dimensional-invariance principle for a particular distance-based formation control system \cite{krick2009stabilisation} has been proved in our previous paper \cite{sun2015rigid}. We note that such a principle also holds for a large family of formation control systems, including those covered in \cite{sun2016exponential}.  In a later section we will show, by this example, how this invariance principle could assist our understanding on the evolutions  of agents' positions in a multi-agent formation system.


Table  \ref{table_wij} in the Appendix reviews
several typical coupled dynamical systems  with scalar couplings reported in the literature that can be described by the general form \eqref{eq:coupled_linear_wij}. As a consequence of Theorem \ref{thm_rankpreserving_scalar} and Corollary \ref{corollary_subspace_preserving}, all systems reviewed in Table  \ref{table_wij} satisfy the dimensional-invariance and subspace-preserving property.

\section{Coupled dynamical systems with matrix-weighted couplings}  \label{sec:matrix-weighted}
In this section we consider the following dynamical systems with matrix couplings
\begin{align} \label{eq:coupled_linear_Wij}
\dot x_i(t) &= \sum_{j=1}^n W_{ij}(t) x_j(t)  \nonumber \\
& = W_{i1}(t) x_1(t) + W_{i2}(t) x_2(t) + \cdots + W_{in}(t) x_n(t),
\end{align}
where  $W_{ij}(t) \in \mathbb{R}^{d \times d}$ is the state coefficient or coupling matrix: i.e.,  $W_{ii}(t)$ is the coefficient matrix for system $i$, and $W_{ij}(t)$ is the coupling matrix from systems $j$  to system $i$.

\subsection{Main results}
The main result in this section is the following theorem:
\begin{theorem}  \label{thm:rank_preserving_matrix}
The coupled dynamical systems \eqref{eq:coupled_linear_Wij} have the dimensional-invariance principle in the sense of \eqref{eq:rankpreserving_sense} if and only if the coefficient and coupling matrices $W_{ij}(t)$ satisfy the following condition
\begin{align}
W_{ii}(t) &= A(t)  + b_{ii}(t) I_d, i = 1, 2, \ldots, n;   \label{eq:rank_preserving_matrix1} \\
W_{ij}(t) &= b_{ji}(t) I_d,  i, j =  1, 2, \ldots, n, i \neq j,  \label{eq:rank_preserving_matrix2}
\end{align}
for some matrix $A(t) \in \mathbb{R}^{d \times d}$ and scalars $\{b_{ij}(t)\}, i, j =  1, 2, \ldots, n$.
\end{theorem}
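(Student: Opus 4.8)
The plan is to prove the two implications separately. For sufficiency I would reduce the matrix-coupled flow to the canonical rank-preserving form $\dot X = A(t) X + X B(t)$ and invoke Lemma~\ref{lemma_rank}, exactly as in the scalar case; for necessity I would show that requiring \eqref{eq:rankpreserving_sense} to hold for every rank-one initial matrix already forces the block structure \eqref{eq:rank_preserving_matrix1}--\eqref{eq:rank_preserving_matrix2}. The genuinely new work is the necessity direction, since matrix (non-scalar) couplings no longer automatically collapse into an $AX + XB$ flow.

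\emph{Sufficiency.} Assuming \eqref{eq:rank_preserving_matrix1}--\eqref{eq:rank_preserving_matrix2}, I would substitute into the $i$-th column equation and compute $\dot x_i = (A + b_{ii} I_d) x_i + \sum_{j \ne i} b_{ji} x_j = A x_i + \sum_{j} b_{ji} x_j$, which is precisely the $i$-th column of $A(t) X(t) + X(t) B(t)$ with $B(t) = \{b_{ij}(t)\}$. Hence $\dot X = A(t) X + X B(t)$, and Lemma~\ref{lemma_rank} delivers rank-invariance in the sense of \eqref{eq:rankpreserving_sense}. This step is routine.

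\emph{Necessity.} I would use the geometric fact that if a solution stays on the manifold $M_1$ of rank-one $d \times n$ matrices, its velocity must lie in the tangent space $T_X M_1 = \{ p\, b^\top + a\, q^\top : p \in \mathbb{R}^d,\, q \in \mathbb{R}^n \}$ at $X = a b^\top$. Because the state-transition map of the (invertible) linear flow is assumed to preserve rank, every rank-one matrix arises as $X(\tau)$ for some admissible solution, so this tangency must hold at every time $\tau$ and every rank-one point. Testing first with $X = v e_k^\top$ gives $\dot x_i = W_{ik}(\tau) v$, and tangency forces $W_{ik}(\tau) v \parallel v$ for all $v$ and all $i \ne k$; the elementary lemma ``$Mv \parallel v$ for all $v$ implies $M = \lambda I_d$'' then yields the off-diagonal condition \eqref{eq:rank_preserving_matrix2}.

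Testing next with $X = v(e_k + e_l)^\top$ and using the just-established off-diagonal form, matching the two expressions for the tangent-space coefficient $p$ obtained from columns $k$ and $l$ gives $(W_{kk}(\tau) - W_{ll}(\tau)) v \parallel v$ for all $v$; the same lemma shows $W_{kk} - W_{ll}$ is a scalar multiple of $I_d$, so all diagonal blocks agree modulo $\mathbb{R} I_d$. Setting $A(\tau) := W_{11}(\tau)$ and reading off the residual scalars $b_{ij}(\tau)$ recovers \eqref{eq:rank_preserving_matrix1}. I expect the main obstacle to be the necessity setup itself: justifying that instantaneous tangency to the rank-one manifold faithfully captures \eqref{eq:rankpreserving_sense}, and transporting that tangency condition to every time $\tau$ through the invertible state transition so as to handle the time-varying couplings; once the eigenvector lemma is in place, the remaining algebra is short.
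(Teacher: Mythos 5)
Your proof is correct, and its necessity half takes a genuinely different route from the paper's. The paper handles both directions at once by invoking the converse part of Lemma~\ref{lemma_rank} --- every rank-preserving flow on $\mathbb{R}^{d\times n}$ has the form $\dot X=A(t)X+XB(t)$ --- and then matching the block coefficients of \eqref{eq:X_expanded_2} against those of \eqref{eq:X_expanded_1}. You instead re-derive the necessary structure from first principles: tangency of the vector field to the rank-one stratum at the test points $v e_k^{\top}$ and $v(e_k+e_l)^{\top}$, combined with the elementary fact that $Mv\parallel v$ for all $v$ forces $M=\lambda I_d$. What your route buys is self-containedness (you need only the tangent-space formula \eqref{eq:tangent_space}, not the converse of Lemma~\ref{lemma_rank}) and an explicit treatment of a point the paper glosses over for time-varying couplings: why the tangency constraint must hold at every time $\tau>0$ and at every rank-one matrix, not just at $t=0$. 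Your resolution of that point is sound --- since \eqref{eq:rankpreserving_sense} is assumed for all initial conditions, the invertible transition map $\Phi_\tau$ preserves rank, hence so does $\Phi_\tau^{-1}$, so $\Phi_\tau$ maps the rank-one stratum onto itself and every rank-one matrix is attained at time $\tau$ by a solution that stays rank one. What the paper's route buys is brevity: once Lemma~\ref{lemma_rank} is accepted in both directions, the remaining step is a coefficient identification, which is legitimate here because the vector field is linear in $X$, so equality of the two expressions for all $X$ pins down each block. Your sufficiency direction coincides with the paper's.
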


The proof can be found in   Appendix III. {\color{blue} Again, we emphasize that the coupling matrices $W_{ii}$ can be constant or time-varying.} We note that in \cite{collinear2017ACC}, an   equivalent  condition for coupled dynamical systems' solutions to be collinear (or to be $r$-coplanar with dimension $r$) was obtained, via a somewhat more complicated proof. {\color{blue} Here, we provide a unified analysis for the coupling condition via the rank-preserving flow theory, which provides additional insights to the dimensional-invariance principle for matrix-coupled dynamical systems. We have further generalized the results in \cite{collinear2017ACC} from time-invariant couplings to time-varying couplings. }

The following corollary further characterizes the solution property for the case of $n$ coupled dynamical systems.
\begin{corollary}
Suppose the coupled system \eqref{eq:coupled_linear_Wij} consists of $n$ individual systems, and initial conditions $x(0)$ for all the coupled systems are chosen to satisfy  $X(0) \in \mathbb{S}(n)$ (i.e. the real symmetric matrix space).  Then the coupled dynamical systems \eqref{eq:coupled_linear_Wij} have both the dimensional-invariance principle and signature-preserving property \footnote{The signature of a real symmetric matrix refers to the   number (counted with multiplicity) of its positive, negative and zero eigenvalues. See its precise definition in the Appendix. } if and only if the coefficient and coupling matrices $W_{ij}(t)$ satisfy the following condition
\begin{align}
W_{ii}(t) &= A(t)  + a_{ii}(t) I_d, i = 1, 2, \ldots, n;  \label{eq:theorem2_condition1}\\
W_{ij}(t) &= a_{ij}(t) I_d,  i, j =  1, 2, \ldots, n, i \neq j,  \label{eq:theorem2_condition2}
\end{align}
for some matrix $A(t)  = \{a_{ij}(t)\}\in \mathbb{R}^{d \times d}$.
\end{corollary}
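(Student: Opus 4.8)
The plan is to reduce the corollary to a congruence matrix flow and then invoke Sylvester's law of inertia. First I note that the hypothesis $X(0)\in\mathbb{S}(n)$ forces $d=n$, so $X(t)$ is a square matrix that is symmetric at $t=0$, and its signature is the triple of counts of positive, negative and zero eigenvalues. Reading off the compact form from \eqref{eq:theorem2_condition1}--\eqref{eq:theorem2_condition2}, the block coefficient matrix is $I_n\otimes A+\{a_{ij}\}\otimes I_d$; using $(I_n\otimes A)\mathrm{vec}(X)=\mathrm{vec}(AX)$ together with $(A\otimes I_d)\mathrm{vec}(X)=\mathrm{vec}(XA^{\top})$, and observing that for $d=n$ the scalar matrix $\{a_{ij}\}$ coincides with $A$ itself, I would rewrite the dynamics as the matrix flow
\[
\dot X(t)=A(t)X(t)+X(t)A^{\top}(t).
\]
This is the central observation: the corollary's condition is exactly the specialization of Theorem~\ref{thm:rank_preserving_matrix} in which the scalar coupling matrix equals the transpose of the common matrix $A$.

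Sufficiency is then immediate. Let $\Phi(t)$ solve $\dot\Phi=A(t)\Phi$ with $\Phi(0)=I$; then $\Phi(t)$ is invertible for all $t$. A direct differentiation shows that $X(t)=\Phi(t)X(0)\Phi^{\top}(t)$ solves the flow, so by uniqueness this is the solution. Invertibility of $\Phi(t)$ gives $\mathrm{rank}(X(t))=\mathrm{rank}(X(0))$, the dimensional-invariance principle (also a special case of Lemma~\ref{lemma_rank} with $B=A^{\top}$), while the same expression exhibits $X(t)$ as a congruence transform of $X(0)$; hence $X(t)$ stays symmetric and, by Sylvester's law of inertia, retains the signature of $X(0)$.

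For necessity I would argue in two stages. Dimensional invariance already forces, by Theorem~\ref{thm:rank_preserving_matrix}, a decomposition $W_{ii}=\hat A+b_{ii}I_d$ and $W_{ij}=b_{ji}I_d$, i.e. $\dot X=\hat A X+XB$ with $B=\{b_{ij}\}$. Since signatures are defined only for symmetric matrices, signature preservation entails symmetry preservation, so $S:=X-X^{\top}$ vanishes identically along every trajectory issued from a symmetric initial condition. Differentiating and evaluating at the initial instant gives $(\hat A-B^{\top})X(0)-X(0)(\hat A-B^{\top})^{\top}=0$; taking $X(0)=I$ shows $\hat A-B^{\top}$ is symmetric, and letting $X(0)$ range over all symmetric matrices forces $\hat A-B^{\top}$ to commute with every symmetric matrix, whence $\hat A-B^{\top}=\lambda I$ for a scalar $\lambda(t)$. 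Finally I would absorb this scalar by setting $A:=B^{\top}+\tfrac{1}{2}\lambda I$, which leaves the physical matrices $W_{ij}$ unchanged while recasting the flow as $\dot X=AX+XA^{\top}$; rereading this through \eqref{eq:theorem2_condition1}--\eqref{eq:theorem2_condition2} then recovers the stated condition with $A=\{a_{ij}\}$.

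The main obstacle I anticipate is the necessity direction, and within it two points require care: first, the commutant step, where one must justify that a matrix commuting with all symmetric matrices is a scalar multiple of the identity; and second, the gauge/absorption step, which exploits the non-uniqueness of the $(\hat A,b_{ii})$ splitting in Theorem~\ref{thm:rank_preserving_matrix} to distribute the scalar $\lambda$ evenly between $A$ and $A^{\top}$. For the time-varying case one must additionally be able to prescribe an arbitrary symmetric state at each instant as an initial condition, so that the pointwise identity $\hat A(t)-B^{\top}(t)=\lambda(t)I$ holds for all $t$ rather than only at the initial time.
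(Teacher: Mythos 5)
Your proposal is correct, and it reaches the same pivotal reduction as the paper --- namely that conditions \eqref{eq:theorem2_condition1}--\eqref{eq:theorem2_condition2} with $A=\{a_{ij}\}$ are exactly the specialization of Theorem~\ref{thm:rank_preserving_matrix} in which the scalar coupling matrix equals $A^{\top}$, so that the compact flow becomes $\dot X=A(t)X+XA^{\top}(t)$ --- but it executes both directions differently. The paper simply invokes Lemma~\ref{lemma_rank_symmetric}: there, sufficiency of signature preservation rests on a continuity argument (an eigenvalue changing sign would have to cross zero and drop the rank), and necessity on the tangent-space characterization of $\mathbb{S}(r,N)$. You instead prove sufficiency by exhibiting the explicit congruence solution $X(t)=\Phi(t)X(0)\Phi^{\top}(t)$ with $\dot\Phi=A(t)\Phi$ and appealing to Sylvester's law of inertia, which is more constructive and makes both symmetry preservation and inertia invariance manifest; and you prove necessity by differentiating $X-X^{\top}$ along the flow $\dot X=\hat A X+XB$ supplied by Theorem~\ref{thm:rank_preserving_matrix}, deducing $\hat A-B^{\top}=\lambda(t)I$ from the commutant of the symmetric matrices, and re-gauging the non-unique splitting to absorb $\lambda$ evenly. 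Both points you flag as delicate do close: a matrix commuting with every element of $\mathbb{S}(n)$ is scalar (commutation with $e_ie_i^{\top}$ forces diagonality, with $e_ie_j^{\top}+e_je_i^{\top}$ forces equal diagonal entries), and the identity $\hat A(t)-B^{\top}(t)=\lambda(t)I$ holds pointwise in $t$ because the transition map of the linear flow is invertible, so once symmetry is preserved the set of states reachable at time $t$ from $\mathbb{S}(n)$ is an $\tfrac{n(n+1)}{2}$-dimensional subspace of $\mathbb{S}(n)$, hence all of $\mathbb{S}(n)$, and the first-order computation can be repeated at every instant. The cost of your route is length; what it buys is a self-contained argument that does not lean on the rather terse converse statement of Lemma~\ref{lemma_rank_symmetric}, plus an explicit solution formula.
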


\begin{proof}
By invoking Lemma \ref{lemma_rank_symmetric} (in Appendix II), the above condition can be proved by modifying $B$ in the proof of Theorem 2 as $A^\top$.
\end{proof}
\subsection{Interpretations and implications}
The coupled dynamical systems \eqref{eq:coupled_linear_Wij} with matrix weights are also very general that can describe many different types of distributed/networked control systems. Examples include  the matrix-weighted consensus dynamics \cite{tuna2016synchronization}, bearing-based formation control systems  \cite{zhao2016bearing}, or networked linear systems for synchronization \cite{scardovi2009synchronization}.

To guarantee the invariance of the dimensions of the subspaces spanned by individual systems' solutions, the coefficient matrices for each individual system  should have the same matrix structure, with the difference being a scalar {\color{blue} multiple of an identity matrix (hence a diagonal matrix with the same diagonal entry $b_{ii}(t)$)}. Furthermore, the couplings should also be a scalar {\color{blue} multiple of an identity matrix (hence a diagonal matrix with the same diagonal entry $b_{ji}(t)$)}. Since the condition is necessary and sufficient, for other types of couplings between individual systems that are not in the forms of \eqref{eq:rank_preserving_matrix1} and \eqref{eq:rank_preserving_matrix2},  the dimensional-invariance principles cannot be guaranteed.

We also note a difference of the invariance principles between the scalar-coupling case and the matrix-coupling case. As proved in Corollary \ref{corollary_subspace_preserving}, the solutions of coupled system \eqref{eq:coupled_linear_wij} {\color{blue} with scalar couplings} not only span a subspace of the same dimension to that of their initial conditions, but also \textit{evolve in that same subspace spanned by   initial conditions}. However, this subspace-preserving property is not guaranteed for the solutions of the coupled system \eqref{eq:coupled_linear_Wij} {\color{blue} with matrix couplings}. Theorem \ref{thm:rank_preserving_matrix} only shows the invariance of the dimension of the spanned subspace, while the solutions may also evolve in a \textit{different} subspace with the same dimension. {\color{blue} Similarly to \cite{helmke1994optimization} and \cite{collinear2017ACC}}, we introduce the concept of Grassmannian subspace to illustrate the difference.  
The Grassmannian, denoted as $\text{Gr}(r, d)$, is a space which parameterizes all linear subspaces of a given dimension $r$ in a vector space $V$  (in this paper, we restrict our attention of $V$ to the $d$-dimensional Euclidean space $\mathbb{R}^d$) \cite[Page 21]{lee2009manifolds}. For example, for $r = 1$, the Grassmannian $\text{Gr}(1, d)$ is the space of all lines through the origin in the $d$-dimensional space, and it is the same as the projective space of $d - 1$ dimensions. For the solutions of coupled dynamical system \eqref{eq:coupled_linear_Wij}, they will remain collinear if they start collinearly, but \textit{the line that passes through the solutions of all individual systems may not be identical over time}. In other words, the solutions  will evolve in $\text{Gr}(r, d)$ if they start at a subspace of dimension $r$. {\color{blue} In constant, solutions of the coupled system \eqref{eq:coupled_linear_wij}   with scalar couplings will remain in the same line (or subspace) as spanned by their initial positions.  }

For some typical coupled dynamical systems  with matrix coefficient/couplings reported in the literature that can be described by the general form \eqref{eq:coupled_linear_Wij}, see Table  \ref{table_Wij} in the Appendix.



\section{Applications in convergence analysis for formation  control systems with generalized controllers}  \label{sec:application_formation}
Consider a multi-agent formation control system in the following form
 \begin{equation}\label{eq:position_system}
\dot{x}_i=-\sum_{j\in \mathcal N_i} (\|x_i-x_j\|^2-d_{k_{ij}}^2)(x_i-x_j),
\quad i=1,\ldots,n
\end{equation}
where ${x}_i \in \mathbb{R}^d$ is the position of agent $i$ that lives in $\mathbb{R}^d$, $\mathcal N_i$ denotes agent $i$'s neighboring set, and $d_{k_{ij}}$ is the desired distance that agents $i$ and agent $j$ aim to achieve. In the literature, the above control system \eqref{eq:position_system} is usually called \emph{distance-based formation control system} \cite{oh2015survey}, since the target formation shape is described by a set of interagent distances.

 The collinearity-preserving property for the solutions of the formation control systems \eqref{eq:position_system} was repeatedly observed with different perspectives in several previous papers (e.g., see \cite{krick2009stabilisation,anderson2014counting,oh2014distance}). {\color{blue} In \cite{sun2015rigid}, we have generalized this collinearity-preserving property,  and proved a general dimensional-invariance principle for the formation control system \eqref{eq:position_system}. Inspired by the results in Theorem \ref{thm_rankpreserving_scalar}, }one can also consider the following formation control systems with generalized controllers
 \begin{equation}\label{eq:position_system_modified}
\dot{x}_i=-\sum_{j\in \mathcal N_i} g_{ij}(e_{ij}(x_i, x_j))(x_i-x_j),
\quad i=1,\ldots,n
\end{equation}
where $g_{ij}$ is a continuous function of the distance error $e_{ij}(x_i, x_j)$, which is defined as $e_{ij} = \|x_i-x_j\|^2-d_{k_{ij}}^2$. The local exponential stability of the general formation control system \eqref{eq:position_system_modified} has been discussed in \cite{sun2016exponential}.

The following results are  direct consequences of Theorem~\ref{thm_rankpreserving_scalar} for the general multi-agent formation system \eqref{eq:position_system_modified}.

\begin{corollary}
For 2-D formations, if all the agents start with collinear positions,
they will always be in that collinear subspace spanned by their initial positions under the general  control law described by \eqref{eq:position_system_modified}. Similarly, for 3-D formations, if all the agents
start with coplanar (resp. collinear) positions, then they will
always be in that coplanar (resp. collinear) subspaces spanned by their initial positions under the control law \eqref{eq:position_system_modified}.
\end{corollary}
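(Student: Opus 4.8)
The plan is to recognize the generalized formation control system \eqref{eq:position_system_modified} as a special instance of the scalar-coupled model \eqref{eq:coupled_linear_wij} and then to invoke the subspace-preserving property of Corollary~\ref{corollary_subspace_preserving}. Expanding the summation in \eqref{eq:position_system_modified} gives
\[
\dot x_i = -\Big(\sum_{j \in \mathcal N_i} g_{ij}(e_{ij})\Big) x_i + \sum_{j \in \mathcal N_i} g_{ij}(e_{ij})\, x_j,
\]
so that setting $w_{ij} = g_{ij}(e_{ij})$ for $j \in \mathcal N_i$, $w_{ii} = -\sum_{j \in \mathcal N_i} g_{ij}(e_{ij})$, and $w_{ij} = 0$ otherwise places the system exactly in the form \eqref{eq:coupled_linear_wij}. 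Although these weights are state-dependent, along any fixed solution $x(t)$ they reduce to continuous functions of $t$ (by continuity of each $g_{ij}$ and of the solution), so Assumption~1 holds and both Theorem~\ref{thm_rankpreserving_scalar} and Corollary~\ref{corollary_subspace_preserving} apply verbatim: the linear span of $x_1(t),\ldots,x_n(t)$ is invariant in the sense of \eqref{eq:subspacepreserving_sense}.

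The one subtlety I would address carefully is the distinction between a \emph{linear} subspace (through the origin) and the \emph{affine} line or plane on which a generic formation actually lies; the span in \eqref{eq:subspacepreserving_sense} is a linear object, whereas \emph{collinear} and \emph{coplanar} are affine conditions. To bridge this gap I would exploit the translation invariance of \eqref{eq:position_system_modified}: since the right-hand side depends only on the differences $x_i-x_j$ and on $e_{ij} = \|x_i-x_j\|^2 - d_{k_{ij}}^2$, shifting every agent by a common fixed vector $c \in \mathbb{R}^d$ maps solutions to solutions, so by uniqueness the trajectory launched from the shifted configuration is exactly the original trajectory shifted by $c$.

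Combining these two ingredients yields the statement. Suppose the $x_i(0)$ lie on an affine line $L = p + \text{span}(v)$ (the 2-D, or the 3-D collinear, case). Translating by $-p$, the shifted initial positions $x_i(0)-p$ lie in the one-dimensional linear subspace $\text{span}(v)$, so the corresponding matrix has rank $1$. Applying Corollary~\ref{corollary_subspace_preserving} to the shifted trajectory keeps it inside $\text{span}(v)$ for all $t$; translating back by $+p$ and using translation invariance then shows the original agents remain on $L$ for all $t$. The coplanar 3-D case is handled identically, with $\text{span}(v)$ replaced by the two-dimensional linear subspace obtained after translating by a point $p$ of the initial plane.

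The main obstacle is therefore not the dynamics but this affine-versus-linear bookkeeping: the rank-preserving and subspace-preserving results speak about linear spans through the origin, while the geometric claim concerns affine collinearity/coplanarity. Once translation invariance is used to move the relevant line or plane through the origin, the conclusion becomes an immediate corollary of the subspace-preserving property already established for scalar-coupled systems in Corollary~\ref{corollary_subspace_preserving}.
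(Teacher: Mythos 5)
Your proposal is correct and follows the same route the paper takes: rewrite \eqref{eq:position_system_modified} in the scalar-coupled form \eqref{eq:coupled_linear_wij} with $w_{ij}=g_{ij}(e_{ij})$ and $w_{ii}=-\sum_{j\in\mathcal N_i}g_{ij}(e_{ij})$ (exactly the identification in Table~\ref{table_wij}), then invoke Theorem~\ref{thm_rankpreserving_scalar} and Corollary~\ref{corollary_subspace_preserving}. The paper stops there and calls the corollary a ``direct consequence''; you go one step further and patch the affine-versus-linear issue, which is a genuine gap in the literal reading of the paper's argument: if the agents lie on a line not passing through the origin, $\mathrm{span}(X(0))$ is two-dimensional and the subspace-preserving statement \eqref{eq:subspacepreserving_sense} says nothing about collinearity. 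Your fix via translation invariance of \eqref{eq:position_system_modified} (the right-hand side depends only on the differences $x_i-x_j$, so shifting by $-p$, applying Corollary~\ref{corollary_subspace_preserving} to the shifted trajectory, and shifting back) is valid and is what makes the affine claim rigorous. An equivalent shortcut worth noting: because the weight matrix here has zero row sums ($w_{ii}=-\sum_{j\ne i}w_{ij}$), one has $\dot x_i=\sum_j w_{ij}(x_j-p)$ for any fixed $p$, so each velocity already lies in the linear part of the affine hull of the configuration; either observation closes the gap. Your remark that the state-dependent weights become continuous functions of $t$ along a fixed solution, so that Assumption~1 and the lemmas apply, is also the right way to justify the state-dependent case the paper asserts.
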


Conversely, one can also obtain the following dimensional-invariance principle for formation systems \eqref{eq:position_system_modified} with non-collinear/non-coplanar initial positions.
\begin{corollary}
For 2-D/3-D formations, if all the agents start with non-collinear/non-coplanar positions,
they will always be non-collinear/non-coplanar  under the general control law described by \eqref{eq:position_system_modified}.
\end{corollary}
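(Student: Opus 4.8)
The plan is to invoke the dimensional-invariance principle of Theorem~\ref{thm_rankpreserving_scalar}, but applied to the matrix of \emph{relative} positions rather than to $X$ itself; this reduction is the essential point, since collinearity and coplanarity are affine rather than linear notions. First I would observe that \eqref{eq:position_system_modified} is a genuine instance of the scalar-coupled system \eqref{eq:coupled_linear_wij}: writing $w_{ij}(x) := g_{ij}(e_{ij}(x_i,x_j))$, the controller reads $\dot x_i = -\sum_{j\in\mathcal N_i} w_{ij}(x)\,(x_i-x_j)$, i.e. $\dot x_i = \sum_{j} \hat w_{ij}(x)\, x_j$ with $\hat w_{ii} = -\sum_{j\in\mathcal N_i} w_{ij}$ and $\hat w_{ij} = w_{ij}$ for $j\in\mathcal N_i$ (and $\hat w_{ij}=0$ otherwise). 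Because $g_{ij}$ and $e_{ij}$ are continuous, the $\hat w_{ij}$ are continuous state functions, so Assumption~1 holds and Theorem~\ref{thm_rankpreserving_scalar} is available.

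Next I would exploit the translation invariance of the right-hand side, which depends only on the differences $x_i-x_j$. Fixing agent $1$ as reference and setting $z_i := x_i - x_1$ for $i=2,\dots,n$ (so $z_1=0$ and $x_i-x_j = z_i - z_j$), a direct computation gives
\begin{equation}
\dot z_i = \dot x_i - \dot x_1 = -\sum_{j\in\mathcal N_i} w_{ij}(z_i - z_j) - \sum_{j\in\mathcal N_1} w_{1j}\, z_j,
\end{equation}
which is again a scalar-weighted combination of the $z_k$'s. Hence the reduced system $\dot Z = Z\,\tilde W^{\top}$, with $Z = [z_2,\dots,z_n]\in\mathbb R^{d\times(n-1)}$ and a continuous (state-dependent) matrix $\tilde W$, is itself of the form \eqref{eq:compact_matrix_wij}, so Theorem~\ref{thm_rankpreserving_scalar} yields $\text{rank}(Z(t)) = \text{rank}(Z(0))$ for all $t\ge 0$.

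Finally I would translate this rank invariance back into geometry. The affine dimension of the configuration $x_1,\dots,x_n$ equals $\text{rank}(Z)$, so the points are collinear precisely when $\text{rank}(Z)\le 1$ and coplanar precisely when $\text{rank}(Z)\le 2$. In the plane ($d=2$) non-collinearity means $\text{rank}(Z)=2$, the maximal value since $Z$ has two rows, and in space ($d=3$) non-coplanarity means $\text{rank}(Z)=3$. Since these ranks are preserved, a configuration that starts non-collinear (in 2-D) or non-coplanar (in 3-D) remains so for all $t\ge 0$, which is the assertion.

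The step I expect to be the real obstacle, and the reason one cannot simply apply Theorem~\ref{thm_rankpreserving_scalar} to $X$ directly, is the passage from the linear rank of the state matrix to the affine dimension of the configuration: a collinear set of points lying on a line that misses the origin already has $\text{rank}(X)=2$, so $\text{rank}(X)$ cannot detect non-collinearity. Working with the relative-position matrix $Z$ removes the offset and makes the linear rank coincide with the affine dimension; the only technical check that remains is that the $z$-dynamics closes as a scalar-coupled system, which is exactly the displayed computation above.
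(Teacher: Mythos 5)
Your proof is correct, but it is more than a restatement of the paper's argument: the paper simply declares both formation corollaries to be ``direct consequences'' of Theorem~\ref{thm_rankpreserving_scalar}, i.e.\ it implicitly applies the rank-invariance of $X(t)$ itself, whereas you insert a reduction to the relative-position matrix $Z=[x_2-x_1,\dots,x_n-x_1]$ before invoking the theorem. That extra step is not cosmetic. As you point out, $\mathrm{rank}(X)$ measures the \emph{linear} span of the positions, while collinearity and coplanarity are \emph{affine} notions: a non-collinear planar configuration has $\mathrm{rank}(X)=2$, but so does a collinear configuration whose line misses the origin, so preservation of $\mathrm{rank}(X)=2$ alone does not forbid the agents from collapsing onto such a line. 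Your computation that the $z$-dynamics closes as a scalar-coupled system (which works because the right-hand side of \eqref{eq:position_system_modified} depends only on differences $x_i-x_j$, and the couplings $g_{ij}(e_{ij})$ are functions of those differences) makes $\mathrm{rank}(Z)$ equal to the affine dimension and turns the corollary into a genuine application of Theorem~\ref{thm_rankpreserving_scalar}. An equivalent repair, worth noting, is to observe that the induced weight matrix $\hat W$ has zero row sums, so one may append a row of ones to $X$ (or subtract the centroid) and apply the rank-preserving lemma to the augmented matrix. Two small caveats: Theorem~\ref{thm_rankpreserving_scalar} is stated for time-varying couplings, so you should say explicitly that along a fixed solution the state-dependent weights $\hat w_{ij}(x(t))$ become continuous functions of $t$, which is how the paper itself treats state-dependent couplings; and, consistent with Remark~1 and the caption of Figure~\ref{invariance_noncollinear_noncoplanar}, the conclusion holds for all finite $t$, with possible rank drop only in the limit $t\to\infty$.
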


Figures \ref{invariance_collinear_coplanar} and \ref{invariance_noncollinear_noncoplanar} show intuitive explanations of the above two corollaries.
\begin{figure}
  \centering
  \includegraphics[width=80mm]{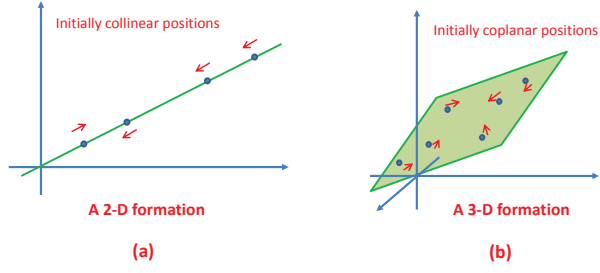}
  \caption{The sets of collinear or coplanar positions are invariant for
2-D/3-D   formation control systems \eqref{eq:position_system_modified}. }
  \label{invariance_collinear_coplanar}
\end{figure}

\begin{figure}
  \centering
  \includegraphics[width=80mm]{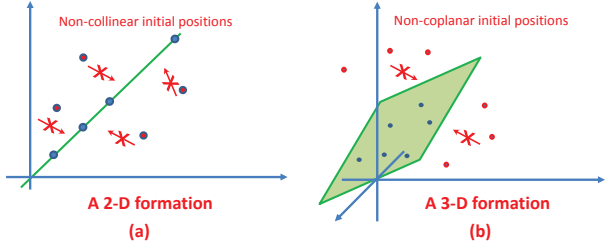}
  \caption{If all agents start at non-collinear (non-coplanar)
positions, then their positions will be non-collinear (non-coplanar) {\color{blue} for any finite time.} }
  \label{invariance_noncollinear_noncoplanar}
\end{figure}
The global analysis of stability and convergence for the formation control system \eqref{eq:position_system} has been discussed in several papers (e.g. \cite{oh2014distance}, \cite{sun2016exponential}, \cite{Xudong2017SIAM}), which turns out to be a very challenging problem. The dimensional-invariance (and subspace-preserving) principles as shown in the above two corollaries will hopefully present additional insights for the convergence and stability analysis of general formation control systems \eqref{eq:position_system_modified}.
In addition,   we can conclude that for any formation control system, if it can be written in the form of \eqref{eq:coupled_linear_wij}, then agents cannot escape collinear/coplanar positions if they start with collinear/coplanar positions. If one needs to design formation controllers to avoid such an invariance property and to enable agents to escape collinear/coplanar positions even if they start collinearly/coplanarly, then one needs to modify the formation  controllers such that they cannot be described by \eqref{eq:coupled_linear_wij}. For typical  examples of formation control systems without the collinear/coplanar invariance property, see \cite{park2012modified,hector2017IFAC}.

\section{Conclusions}  \label{sec:conclusions}
In this paper we {\color{blue} have extended the invariance principles reported in \cite{sun2015rigid} and \cite{collinear2017ACC} to networked coupled systems with general couplings.  A unified analysis via the  rank-preserving matrix flow theory is provided to establish general} invariance principles for coupled dynamical systems (with scalar couplings and with matrix couplings), in relation to the dimensions of the subspaces spanned by their individual solutions.
For coupled dynamical systems with scalar couplings, we prove that their individual solutions satisfy the dimensional-invariance principle (and furthermore, the subspace-preserving principle). For coupled dynamical systems with matrix coefficients/couplings, necessary and sufficient conditions are given to guarantee the dimensional-invariance principle. {\color{blue}The results presented in this paper generalize the findings in \cite{collinear2017ACC} from constant couplings to time-varying and even state-dependent couplings.} The interpretations and implications for the obtained invariance principles are also discussed, with an application to the convergence analysis of formation control systems.
\section*{Appendix I: Background on rank-preserving matrix flow}   \label{sec:appendix}
In this section we will briefly review some background on the rank-preserving flow theory   \cite[Chapter~5]{helmke1994optimization}.

For integers $1 \leq r \leq \text{min}(M,N)$, let
\begin{align}
\mathbb{M}(r, M \times N) = \{X \in \mathbb{R}^{M \times N}  | \text{rank} (X) = r\}
\end{align}
 denote  the set of real $M \times N$ matrices  of fixed rank $r$. The following  results will be useful in later analysis.

 \begin{lemma}
 $\mathbb{M}(r, M \times N)$ is a smooth and connected manifold of dimension $r(M+N-r)$, if $\text{max}(M,N)>1$. The tangent space of  $\mathbb{M}(r, M \times N)$ at an element  $X$ is
  \begin{small}
 \begin{align} \label{eq:tangent_space}
 T_X \mathbb{M} (r, M \times N) = \{\Delta_1 X+ X\Delta_2 | \Delta_1 \in \mathbb{R}^{M \times M}, \Delta_2 \in \mathbb{R}^{N \times N}\}.
 \end{align}
 \end{small}
 \end{lemma}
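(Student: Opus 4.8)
The plan is to exhibit $\mathbb{M}(r, M \times N)$ as a single orbit of a Lie group action and to read off smoothness, dimension, and the tangent space from the orbit structure. I would let $G = GL(M) \times GL(N)$ act on $\mathbb{R}^{M \times N}$ by $(P,Q)\cdot X = PXQ^{-1}$. Because left/right multiplication by invertible matrices preserves rank, each orbit is contained in a single fixed-rank set; conversely the rank normal form writes any rank-$r$ matrix as $X = P E_r Q^{-1}$ with $E_r = \left[\begin{smallmatrix} I_r & 0\\ 0 & 0\end{smallmatrix}\right]$, so $G$ acts transitively on $\mathbb{M}(r, M\times N)$, which is therefore exactly one orbit. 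I would then invoke the standard fact that an orbit of a smooth Lie group action carries a natural smooth (immersed submanifold) structure, and upgrade it to \emph{embedded} here by noting that the fixed-rank set is locally closed: it is the set-difference of the closed loci $\{\mathrm{rank}\le r\}$ and $\{\mathrm{rank}\le r-1\}$, so each orbit is open in its closure and hence an embedded submanifold.

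For the tangent space and dimension I would differentiate the orbit map $\phi_X\colon G \to \mathbb{M}(r, M\times N)$, $(P,Q)\mapsto P X Q^{-1}$. Evaluating along curves $P(t)=I+t\Delta_1$ and $Q(t)=I+t\Delta_2'$ gives $\frac{d}{dt}\big|_{0}\, P(t) X Q(t)^{-1} = \Delta_1 X - X\Delta_2'$; since $\Delta_1$ and $\Delta_2'$ range over the full tangent spaces $\mathbb{R}^{M\times M}$ and $\mathbb{R}^{N\times N}$ of $GL(M)$ and $GL(N)$ at the identity, and the orbit map of a transitive action is a submersion onto the orbit, this yields exactly $T_X \mathbb{M}(r, M\times N) = \{\Delta_1 X + X\Delta_2 \mid \Delta_1 \in \mathbb{R}^{M\times M},\ \Delta_2\in\mathbb{R}^{N\times N}\}$ after renaming $\Delta_2 = -\Delta_2'$. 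The dimension then equals $\dim G - \dim \mathrm{Stab}(E_r)$. Computing the stabilizer from the block identity $g E_r = E_r h$ forces $g_{11}=h_{11}$, $g_{21}=0$, and $h_{12}=0$ while leaving the remaining blocks free; counting these free parameters and subtracting from $M^2+N^2$ collapses (with $p=M-r$ and $q=N-r$) to $r(M+N-r)$.

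The delicate point, and the step where the hypothesis on $\max(M,N)$ enters, is connectedness. I would show that the identity component $G^{0}=GL^{+}(M)\times GL^{+}(N)$ still acts transitively, so that the orbit, being the continuous image of the connected group $G^0$, is connected. Concretely, given $X = P E_r Q^{-1}$, one may absorb an unwanted sign of $\det P$ or $\det Q$ into the stabilizer: when $M>r$ the matrix $E_r$ has a zero row, so replacing $P$ by $P\,\mathrm{diag}(1,\dots,1,-1)$ (a reflection supported on that row) changes $\mathrm{sign}\det P$ without altering $P E_r Q^{-1}$, and symmetrically on the column side when $N>r$. Tracking the induced map on the component group $\pi_0(G)\cong(\mathbb{Z}/2)^2$ shows that $\mathrm{Stab}(E_r)$ surjects onto $\pi_0(G)$ precisely when $r<\max(M,N)$, which forces $G^0$-transitivity and hence connectedness.

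I expect this connectedness step to be the main obstacle, since it is exactly where the dimension hypothesis is used and where the argument is sharp: the condition $\max(M,N)>1$ is the relevant one for the rank-one case $r=1$ that is used throughout the paper (for which it reads $\max(M,N)>r$), as the failure for $M=N=1$ — where $\mathbb{M}(1,1\times 1)=\mathbb{R}\setminus\{0\}$ is disconnected — confirms. The smoothness, dimension, and tangent-space computations are by comparison routine once the orbit picture is in place; the only genuine care there is the local-closedness remark needed for embeddedness.
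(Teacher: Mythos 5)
Your proof is correct and follows essentially the same route as the source the paper cites for this lemma (\cite[Ch.~5]{helmke1994optimization}): exhibit $\mathbb{M}(r, M\times N)$ as a single orbit of $GL(M)\times GL(N)$ acting by $(P,Q)\cdot X = PXQ^{-1}$, get embeddedness from local closedness of the fixed-rank locus, and read off the dimension and the tangent space \eqref{eq:tangent_space} by differentiating the orbit map and computing the stabilizer of the rank normal form. Your observation that the sharp connectedness hypothesis is $r<\max(M,N)$ rather than the literal $\max(M,N)>1$ (which fails, e.g., for $\mathbb{M}(2,2\times 2)=GL(2)$) is also correct and worth retaining, since the two conditions coincide precisely in the rank-one case emphasized in the paper.
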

The proof can be found in \cite[Page~133]{helmke1994optimization}.
A matrix differential equation $\dot X = F(t,X)$ evolving on the matrix space $\mathbb{R}^{M \times N}$ is said to be \emph{rank-preserving} if the rank of every solution  $X(t)$ is constant as a function of $t$, that is, $\text{rank}(X(t)) = \text{rank}(X(0))$ for all $t \geq 0$. The following lemma characterizes such rank-preserving flows (cf. Lemma 1.22 in Chapter 5 of \cite{helmke1994optimization}).

\begin{lemma} \label{lemma_rank}
Let $I  \subset \mathbb{R}$ be an interval and let $A(t) \in \mathbb{R}^{M\times M}$, $B(t) \in \mathbb{R}^{N \times N}$ with $t \in I$   be a continuous time-varying family of matrices. Then
\begin{align} \label{eq:lemma_rank}
\begin{small}
\dot X(t) = A(t)X(t) + X(t)B(t),\,\,\,X(0) \in \mathbb{R}^{M \times N}
\end{small}
\end{align}
is  rank-preserving. Conversely, every rank-preserving differential equation on $\mathbb{R}^{M \times N}$ is of the  form \eqref{eq:lemma_rank} for matrices $A(t)$ and $B(t)$.
\end{lemma}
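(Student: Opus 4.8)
The plan is to prove the two directions separately: the forward (sufficiency) implication by an explicit solution formula, and the converse (necessity) by the fixed-rank manifold structure recorded in \eqref{eq:tangent_space}.

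For the forward direction, I would introduce the fundamental matrix solutions of two auxiliary linear systems. Let $\Phi(t) \in \mathbb{R}^{M \times M}$ solve $\dot\Phi = A(t)\Phi$ with $\Phi(0) = I_M$, and let $\Psi(t) \in \mathbb{R}^{N \times N}$ solve $\dot\Psi = \Psi B(t)$ with $\Psi(0) = I_N$. Since $A(t)$ and $B(t)$ are continuous on $I$, both systems admit unique solutions on all of $I$, and by Liouville's formula their determinants are nowhere zero, so $\Phi(t)$ and $\Psi(t)$ are invertible for every $t$. A direct differentiation then shows that $X(t) := \Phi(t) X(0) \Psi(t)$ satisfies $\dot X = A\Phi X(0)\Psi + \Phi X(0)\Psi B = AX + XB$, so by uniqueness of solutions it \emph{is} the solution of \eqref{eq:lemma_rank}. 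Because left and right multiplication by invertible matrices preserves rank, $\mathrm{rank}(X(t)) = \mathrm{rank}(X(0))$ for all $t$, which is exactly rank preservation.

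For the converse, I would fix an initial value of rank $r$ and use that a rank-preserving flow confines its solution to the fixed-rank set $\mathbb{M}(r, M \times N)$. Since this set is a smooth manifold, any solution curve remaining in it must have its velocity tangent to it; that is, $F(t,X) = \dot X \in T_X \mathbb{M}(r, M \times N)$ at each point along the trajectory. The tangent-space description \eqref{eq:tangent_space} then forces $F(t,X) = \Delta_1 X + X \Delta_2$ for suitable $\Delta_1 \in \mathbb{R}^{M \times M}$ and $\Delta_2 \in \mathbb{R}^{N \times N}$, which is precisely the claimed form \eqref{eq:lemma_rank} with $A = \Delta_1$ and $B = \Delta_2$.

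The hard part will be the converse direction, and specifically making this tangency argument rigorous and uniform. Two points need care. First, the representation $\Delta_1 X + X \Delta_2$ of a tangent vector is not unique, and in general the extracted $\Delta_1, \Delta_2$ depend on $X$ (and $t$), so "of the form \eqref{eq:lemma_rank}" must be read as permitting such dependence rather than insisting on purely time-dependent coefficients; I would state explicitly the sense in which $A$ and $B$ are recovered. Second, since rank preservation is asserted on all of $\mathbb{R}^{M \times N}$, the argument should be carried out stratum by stratum over each fixed-rank manifold, with the resulting coefficient fields shown to assemble into globally defined continuous $A(\cdot)$ and $B(\cdot)$. The forward direction, by contrast, is entirely routine once the solution formula $X(t) = \Phi(t) X(0) \Psi(t)$ is in hand.
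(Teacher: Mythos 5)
Your proposal is correct and follows essentially the same route as the paper, which simply sketches the idea and defers the full argument to Helmke--Moore: sufficiency via the explicit solution $X(t)=\Phi(t)X(0)\Psi(t)$ with invertible fundamental matrices, and necessity via tangency of the vector field to the fixed-rank manifolds $\mathbb{M}(r,M\times N)$ through \eqref{eq:tangent_space}. Your caveat that the converse direction only produces $\Delta_1,\Delta_2$ that may depend on $X$ (so the ``form'' \eqref{eq:lemma_rank} must be read accordingly) is a fair and worthwhile clarification of a subtlety the paper glosses over.
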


The proof of Lemma \ref{lemma_rank} is based on the fact that \eqref{eq:lemma_rank} defines a time varying vector field on the subset of the tangent space of $\mathbb{M}(r, M \times N)$ described by  \eqref{eq:tangent_space}.  The full proof can be found in  \cite[Page~139]{helmke1994optimization}. {\color{blue}Note that $I$ can be an open or closed time interval, as long as the solutions of system \eqref{eq:lemma_rank} exist and are well defined over the specified interval. If the existence and uniqueness of solutions to \eqref{eq:lemma_rank} (and the coupled system \eqref{eq:general_system}) are guaranteed for all the time, one can extend the time interval to be any $I \subseteq [0, \infty)$. }
\begin{remark}
The above lemma on rank-preserving flows implies  that the limit value $X(\infty)$ (if it exists) has rank  \emph{less than or equal to}  $\text{rank}(X(0))$. \footnote{ One typical example of $\text{rank}(X(\infty)) < \text{rank}(X(0))$ comes from the formation control problem with unrealizable shapes \cite{sun2014aucc}: If the triangle inequality does not hold for the desired distances in a triangular shape control problem, then all the agents will converge to a \emph{stable collinear equilibrium} for which $\text{rank}(X(\infty)) =1$, even if they start with noncollinear positions with $\text{rank}(X(0)) = 2$. Note that for such flows the rank-preserving property still holds for any finite time but at the limit $t = \infty$ the rank reduces.}  To avoid ambiguity, in this paper we only consider the case that $I$ is a finite time interval. When we say $t \geq 0$, we implicitly exclude the case of $t = \infty$.
\end{remark}

\section*{Appendix II: Extensions on rank-preserving matrix flow}
This section presents some extensions on the rank-preserving flow theory.
The following lemma further characterizes rank-reserving flows on a \textit{symmetric }matrix space. Let $\mathbb{S}(N)$ denote the $N \times N$ real symmetric matrix space. For integers $r \in [1, N]$, let
\begin{align}
\mathbb{S} (r,   N) = \{X \in \mathbb{R}^{N \times N}  | \text{rank} (X) = r\}
\end{align}
 denote  the set of real symmetric $N \times N$  matrices  of fixed rank~$r$.

{\color{blue} For a real symmetric matrix $X \in \mathbb{S} (N)$, its signature, denoted by a pair of three integers $(p, q, s)$, is defined as the numbers of positive, negative and zero eigenvalues (counting multiplicity), respectively.  \footnote{{\color{blue}Note that there is another definition of signature for real symmetric matrices, defined as $\mu =p-q$, i.e., the difference between the number of positive eigenvalues (counting multiplicity) and the number of negative eigenvalues (counting multiplicity). With this definition of matrix signature, the signature-preserving property in Lemma~\ref{lemma_rank_symmetric} still holds. }  } Note that there holds $p+q =r$ for any $X \in \mathbb{S} (r, N)$. A matrix flow $\dot X(t)$ is called \textit{signature-preserving} in $\mathbb{S} (N)$ if the pair $(p, q, s)$ for its solutions $X(t)$ remains constant. }

\begin{lemma} \label{lemma_rank_symmetric}
Let $I  \subset \mathbb{R}$ be an interval and let $A(t) \in \mathbb{R}^{N\times N}$ with $t \in I$   be a continuous time-varying family of matrices. Then
\begin{align} \label{eq:lemma_rank_symmetric}
\dot X(t) = A(t)X(t) + X(t)A^{\top}(t),\,\,\,X(0) \in \mathbb{S}(N)
\end{align}
is a rank-preserving (and hence signature-preserving) flow on $\mathbb{S}(N)$. Conversely, every rank-preserving (and hence signature-preserving) differential equation on $\mathbb{S}(N)$ is of the  form \eqref{eq:lemma_rank_symmetric}.
\end{lemma}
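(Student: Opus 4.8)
The plan is to prove both implications by reducing to the rank-preserving machinery already established in Lemma \ref{lemma_rank} and then exploiting the extra congruence structure that is special to symmetric matrices.

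\textbf{Forward direction.} First I would verify that the flow keeps $\mathbb{S}(N)$ invariant. Setting $Y := X - X^{\top}$ and differentiating \eqref{eq:lemma_rank_symmetric} gives $\dot Y = A(t)Y + Y A^{\top}(t)$ with $Y(0)=0$, a homogeneous linear equation whose unique solution is $Y\equiv 0$, so every $X(t)$ stays symmetric. Rank preservation is then immediate, since \eqref{eq:lemma_rank_symmetric} is exactly the special case of \eqref{eq:lemma_rank} with $B(t)=A^{\top}(t)$, and Lemma \ref{lemma_rank} applies verbatim. For the signature I would exhibit the solution explicitly as a congruence: letting $\Phi(t)$ be the fundamental matrix of $\dot\Phi = A(t)\Phi$, $\Phi(0)=I_N$, one checks by differentiation that $X(t)=\Phi(t)X(0)\Phi^{\top}(t)$ solves \eqref{eq:lemma_rank_symmetric}, and by uniqueness this is \emph{the} solution. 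Because $\Phi(t)$ is invertible for every $t$, each $X(t)$ is congruent to $X(0)$, and Sylvester's law of inertia then guarantees that the triple $(p,q,s)$ is preserved. (Equivalently, signature preservation follows from rank preservation together with continuity: the connected components of $\mathbb{S}(r,N)$ are indexed by signature, and an eigenvalue can change sign only by passing through zero, which would drop the rank.)

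\textbf{Converse direction.} Given any rank-preserving differential equation $\dot X = F(t,X)$ on $\mathbb{S}(N)$, I would first regard it as a rank-preserving flow on the ambient space $\mathbb{R}^{N\times N}$ and invoke the converse half of Lemma \ref{lemma_rank} to write $\dot X = A_1(t)X + X B_1(t)$ for continuous families $A_1,B_1$. The remaining task is purely to symmetrize this representation. Using that both $X$ and $\dot X$ are symmetric along the flow, I average the equation with its transpose:
\[
\dot X = \tfrac{1}{2}\bigl(\dot X + \dot X^{\top}\bigr) = \tfrac{1}{2}\bigl[(A_1(t)+B_1^{\top}(t))X + X(A_1^{\top}(t)+B_1(t))\bigr].
\]
Setting $A(t) := \tfrac{1}{2}\bigl(A_1(t)+B_1^{\top}(t)\bigr)$, which is again continuous, yields precisely $\dot X = A(t)X + X A^{\top}(t)$, establishing the form \eqref{eq:lemma_rank_symmetric}.

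\textbf{Main obstacle.} The forward rank statement comes for free from Lemma \ref{lemma_rank}, and the symmetrization in the converse is routine once the continuous $A_1,B_1$ are in hand, so the two genuinely delicate points are the following. First, one must justify that the converse of Lemma \ref{lemma_rank}, which is phrased for flows on all of $\mathbb{R}^{N\times N}$, may legitimately be applied to a vector field that is a priori defined only on the symmetric stratum; this is handled by noting that the argument behind Lemma \ref{lemma_rank} operates along the solution curve through the tangent-space description \eqref{eq:tangent_space}, so the symmetric trajectory is all that is needed. Second, and conceptually the crux, is the signature claim, where one must ensure invariance of the full triple $(p,q,s)$ rather than merely of the rank. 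The congruence representation $X(t)=\Phi(t)X(0)\Phi^{\top}(t)$ is the crucial device here, since Sylvester's law converts the invertibility of the transition matrix directly into invariance of the inertia. I expect this pinning-down of signature, as opposed to rank alone, to be the main conceptual hurdle, even though it ultimately turns out to be short.
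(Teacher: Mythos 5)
Your proof is correct, and in two places it is tighter than the paper's own argument, so it is worth recording the differences. The paper handles both the invariance of $\mathbb{S}(N)$ and the converse direction by quoting the tangent-space description $T_X\mathbb{S}(r,N)=\{\Delta X+X\Delta^{\top}\}$ outright, and it gets signature preservation from a continuity argument: an eigenvalue of the continuously evolving symmetric $X(t)$ could change sign only by crossing zero, which would drop the rank and contradict rank preservation. You instead (i) prove symmetry-invariance elementarily via the homogeneous equation for $Y=X-X^{\top}$ with $Y(0)=0$; (ii) obtain the signature claim from the explicit congruence $X(t)=\Phi(t)X(0)\Phi^{\top}(t)$ together with Sylvester's law of inertia, which is cleaner than the eigenvalue-continuity route (your parenthetical fallback is precisely the paper's argument) and has the side benefit of re-deriving rank preservation for free from the invertibility of $\Phi(t)$; and (iii) in the converse, derive the symmetric form by averaging the ambient decomposition $\dot X=A_1(t)X+XB_1(t)$ with its transpose and setting $A=\tfrac{1}{2}(A_1+B_1^{\top})$, rather than asserting the symmetric tangent space directly. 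The only point that needs care is the one you already flag: the converse of Lemma \ref{lemma_rank} is stated for flows on all of $\mathbb{R}^{N\times N}$, whereas here the vector field is a priori defined only on the symmetric stratum, so one must argue pointwise through the tangent space of $\mathbb{M}(r,N\times N)$ along the symmetric trajectory; the paper carries essentially the same burden by invoking $T_X\mathbb{S}(r,N)$ without proof, so your treatment is no less rigorous than the original.
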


\begin{proof}
The rank-preserving property of $X(t)$ follows from Lemma \ref{lemma_rank} by letting $B(t) = A^{\top}(t)$. The tangent space of  $\mathbb{S} (r,   N)$ at an element  $X$ is
 \begin{align} \label{eq:tangent_space_symmetric}
 T_X \mathbb{S} (r,   N) = \{\Delta   X+ X\Delta^{\top}  | \Delta  \in \mathbb{R}^{N \times N}\}.
 \end{align}
 Therefore \eqref{eq:lemma_rank_symmetric} defines a time varying vector field on each subset of the tangent space of $\mathbb{S} (r,   N)$.  Thus for any initial condition $X(0) \in \mathbb{S}(N)$, the solution $X(t)$ of \eqref{eq:lemma_rank_symmetric} satisfies $X(t) \in \mathbb{S}(N)$, for $t \in I$. {\color{blue}Since the solution $X(t)$ evolves continuously over time, any change of the values $p, q$ will involve a cross-zero scenario or sign change of the corresponding eigenvalues, which will reduce the rank of the symmetric matrix. However, this will violate the rank-preserving property of $X(t)$ and thus it is impossible. Thus,} the signature-preserving property is therefore a direct consequence of the rank-preserving property and the fact that $X(t) \in \mathbb{S}(N)$. Conversely, suppose $X(t)$ is rank-preserving and $X(t) \in \mathbb{S}(N)$ (and therefore is signature-preserving). Then it defines a vector field $F(t, X)$ on $\mathbb{S} (r,   N)$, with $F(t, X) \in T_X \mathbb{S} (r,   N)$ as in \eqref{eq:tangent_space_symmetric}. Letting $\Delta: = A(t)\in \mathbb{R}^{N\times N}$ completes the proof.
\end{proof}

In the following we present a more refined principle, termed subspace-preserving principle, for matrix differential systems.
\begin{lemma} \label{lemma_subspace}
Let $I  \subset \mathbb{R}$ be an interval and let $B(t) \in \mathbb{R}^{N \times N}$ with $t \in I$   be a continuous time-varying family of matrices. Then
\begin{align} \label{eq:lemma_subspace}
\dot X(t) =  X(t)B(t),\,\,\,X(0) \in \mathbb{R}^{M \times N}
\end{align}
is  \emph{subspace-preserving} in the sense that $\text{span}(X(t)) = \text{span}(X(0))$. Conversely, every subspace-preserving differential equation on $\mathbb{R}^{M \times N}$ is of the  form \eqref{eq:lemma_rank} for some matrices   $B(t)$.
\end{lemma}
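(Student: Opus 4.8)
The plan is to treat \eqref{eq:lemma_subspace} as a linear matrix flow and exploit the invertibility of its associated transition matrix. First I would observe that the solution of $\dot X(t) = X(t) B(t)$ with $X(0)$ prescribed can be factored as $X(t) = X(0)\Psi(t)$, where $\Psi(t) \in \mathbb{R}^{N \times N}$ is the unique solution of the matrix initial value problem $\dot\Psi(t) = \Psi(t) B(t)$, $\Psi(0) = I_N$. This factorization is verified by direct differentiation, since $X(0)\dot\Psi(t) = X(0)\Psi(t)B(t) = X(t)B(t)$, and the existence and uniqueness of $\Psi$ follow from the continuity of $B(t)$ exactly as for \eqref{eq:lemma_rank}.

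The key structural fact I would then invoke is that $\Psi(t)$ is \emph{invertible} for every $t \in I$: it is the fundamental matrix of a linear time-varying system, so its determinant obeys the Liouville--Abel identity $\det\Psi(t) = \exp\!\big(\int_0^t \mathrm{tr}\,B(\tau)\,d\tau\big)$, which never vanishes. Because right-multiplication by an invertible matrix does not change the column space --- for invertible $\Psi$ one has $\{X(0)\Psi v : v \in \mathbb{R}^N\} = \{X(0) w : w \in \mathbb{R}^N\}$, as $w = \Psi v$ ranges over all of $\mathbb{R}^N$ --- I conclude $\mathrm{span}(X(t)) = \mathrm{span}(X(0)\Psi(t)) = \mathrm{span}(X(0))$ for all $t$, which is precisely the subspace-preserving claim.

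For the converse I would argue at the level of tangent spaces, in parallel with the proof of Lemma~\ref{lemma_rank}. Fix a trajectory with $\mathrm{span}(X(t)) \equiv V$, and note that the set $\mathcal{L}_V$ of matrices whose column space is contained in $V$ is a \emph{linear} subspace of $\mathbb{R}^{M \times N}$, equal to its own tangent space at every point. Subspace preservation forces the velocity $\dot X(t)$ to lie in $\mathcal{L}_{\mathrm{span}(X(t))}$; equivalently, every column of $\dot X(t)$ is a linear combination of the columns of $X(t)$, so there exists a matrix $B(t)$ with $\dot X(t) = X(t) B(t)$, the asserted form. A concrete choice is $B(t) = X(t)^{\dagger}\dot X(t)$ via the Moore--Penrose pseudoinverse.

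The main obstacle I anticipate lies in the converse rather than the forward direction: when $X(t)$ fails to have full column rank, the factor $B(t)$ is not unique, and one must verify that a \emph{continuous} choice exists so that \eqref{eq:lemma_subspace} is a genuine differential equation. Here the hypothesis helps decisively --- since $\mathrm{span}(X(t))$ is constant, $\mathrm{rank}(X(t))$ is constant, the pseudoinverse $X(t)^{\dagger}$ varies continuously on the fixed-rank manifold, and hence $B(t) = X(t)^{\dagger}\dot X(t)$ is continuous. I would close by remarking that this is exactly the $A(t) = 0$ specialization of Lemma~\ref{lemma_rank}, consistent with subspace preservation being strictly stronger than mere rank preservation.
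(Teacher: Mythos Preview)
Your forward direction is essentially the paper's: both factor the solution through a fundamental matrix and use its nonsingularity. The paper transposes to $\dot X^{\top} = B^{\top}(t)X^{\top}$ and invokes the state transition matrix $\Phi_{B^{\top}}(t,0)$, writing $X(t) = X(0)\,\Phi_{B^{\top}}^{\top}(t,0)$; you write the right fundamental solution $\Psi$ directly and cite Liouville--Abel for invertibility. These are cosmetic differences.

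For the converse you take a genuinely different route. The paper argues that $\mathrm{span}(X(t)) = \mathrm{span}(X(0))$ gives a nonsingular $\Phi(t)$ with $X(t) = X(0)\Phi(t)$, and then identifies $\Phi^{\top}$ as the transition matrix of some $B^{\top}(t)$. Your argument instead works infinitesimally, in the spirit of the tangent-space proof of Lemma~\ref{lemma_rank}: subspace preservation forces each column of $\dot X(t)$ into $\mathrm{span}(X(t))$, hence $\dot X = XB$, and you extract $B(t) = X(t)^{\dagger}\dot X(t)$ via the pseudoinverse, using constancy of the rank to secure continuity. Your version buys an explicit treatment of the regularity of $B(t)$, which the paper's factorization argument leaves implicit (one must still check that $\Phi(t)$ is differentiable before differentiating $X(0)\Phi(t)$ to recover $B$). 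The paper's version is shorter and makes the link to the transition-matrix formalism more transparent. Both arguments, as written, produce a $B(t)$ along a fixed trajectory; neither fully addresses whether a single $B(t)$ works for all initial conditions, so your treatment is no less complete than the paper's on that point.
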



\begin{proof}
We rewrite   \eqref{eq:lemma_subspace} as $\dot X^{\top}(t) =  B^{\top}(t)X^{\top}(t)$, which has a unique solution given by $X^{\top}(t)  = \Phi_{B^{\top}(t)}(t,0) X^{\top}(0)$, where $\Phi_{B^{\top}(t)}(t,0)$ is the state transition matrix associated with the coefficient matrix $B^{\top}(t)$ (see \cite[Chapter 1.3]{brockett1970finite}).
Therefore, the solution to the system  \eqref{eq:lemma_subspace} can be written as $X(t)  = X(0) \Phi_{B^{\top}(t)}^{\top}(t,0)$. Since the state transition matrix $\Phi_{B^{\top}(t)}^{\top}$ is non-singular \cite[Chapter 1.3]{brockett1970finite}, this implies    that $\text{span}(X(t)) = \text{span}(X(0))$, $\forall t \in I$. For the converse statement, note that $\text{span}(X(t)) = \text{span}(X(0))$ implies that there exists a non-singular  matrix $\Phi$ such that $X(t)  = X(0)\Phi$. In the context of matrix differential equation,   the transpose of the matrix $\Phi$ is the state transition matrix associated with a matrix $B^\top(t)$ in a matrix differential equation in the form of \eqref{eq:lemma_subspace}.
\end{proof}

\begin{remark}
Correspondingly, one can also show that  a matrix differential equation in the form  $\dot X(t) = A(t)  X(t),\,\,\,X(0) \in \mathbb{R}^{M \times N}$, where $A(t) \in \mathbb{R}^{M \times M}$ is a continuous matrix, is \emph{row-subspace-preserving}, in the sense that $\text{span}(X^{\top}(t)) = \text{span}(X^{\top}(0))$. The proof is similar to that of Lemma \ref{lemma_subspace} and is omitted here.
\end{remark}

%
\section*{Appendix III: Proofs of Theorem \ref{thm:rank_preserving_matrix}}
In this section we present the proof for Theorem \ref{thm:rank_preserving_matrix}.

\begin{proof}
Define $X = [x_1, x_2, \ldots, x_n] \in \mathbb{R}^{d \times n}$.
We now determine   conditions for the coefficient/coupling matrix  $W_{ij}$ such that the coupled linear dynamical system \eqref{eq:coupled_linear_Wij} possesses the required dimensional-invariance property.
From Lemma \ref{lemma_rank},
this is equivalent to saying that the matrix differential system $\dot X$ should take the following form
\begin{align} \label{eq:X_equation_compact}
\dot X &= [\dot x_1, \dot x_2, \ldots, \dot x_n] \nonumber \\
&=   A(t) [x_1, x_2, \ldots, x_n] +  [x_1, x_2, \ldots, x_n] B(t)
\end{align}
for some $A(t) \in \mathbb{R}^{d \times d}$ and $B(t) = \{b_{ij(t)}\}  \in \mathbb{R}^{n \times n}$.

Expanding the expression of $\dot X$ in \eqref{eq:X_equation_compact}, one can obtain the equivalent formula in \eqref{eq:X_expanded_1} (in the next page).

Note also that from \eqref{eq:coupled_linear_Wij} the matrix differential  system \eqref{eq:X_equation_compact} can be written as \eqref{eq:X_expanded_2} (in the next page). {\color{blue} Note that to keep a short display of the equations we have suppressed the expression of time $t$ in  Eqs.~\eqref{eq:X_expanded_1} and \eqref{eq:X_expanded_2}, but the matrices $A$ and $B$ can be time-varying. }  In order to guarantee the dimensional-invariance principle, each coefficient term in the system  \eqref{eq:X_expanded_2} should take the identical form as in \eqref{eq:X_expanded_1},
which implies
\begin{align}
W_{11}(t) &= A(t) + b_{11}(t) I_d, \nonumber \\
W_{12}(t) &= b_{21}(t) I_d, \nonumber \\
W_{13}(t) &= b_{31}(t) I_d, \nonumber \\
&\vdots \nonumber \\
W_{ii}(t) &= A(t) + b_{ii}(t) I_d, \nonumber \\
W_{ij}(t) &= b_{ji}(t) I_d, \,\,\, i, j =  1, 2, \ldots, n, i \neq j, \nonumber \\
&\vdots  \nonumber \\
W_{nn}(t) &= A(t) + b_{nn}(t) I_d,
\end{align}
which is the necessary and sufficient condition to guarantee the dimensional-invariance property for the coupled dynamical system \eqref{eq:coupled_linear_Wij}.
\end{proof}

\section*{Appendix IV: A brief review of coupled systems that can be described by  \eqref{eq:coupled_linear_wij} and \eqref{eq:coupled_linear_Wij}}

We review and summarize  in   Table \ref{table_wij} and Table \ref{table_Wij}  several popular coupled dynamical systems reported in the vast literature, which  can be described by \eqref{eq:coupled_linear_wij} and \eqref{eq:coupled_linear_Wij}, respectively. As a consequence of Theorem \ref{thm_rankpreserving_scalar} and Corollary \ref{corollary_subspace_preserving}, for all the coupled or networked control systems with scalar couplings reviewed in Table \ref{table_wij}, dimensional-invariance (and furthermore, subspace-preserving) principles  are guaranteed. For coupled/networked control systems reviewed in Table \ref{table_Wij}, if the matrix condition in Theorem \ref{thm:rank_preserving_matrix} is satisfied, then they also possess the dimensional-invariance property. For example, for the synchronization control of \emph{identical} networked linear systems with matrix coefficients/couplings \cite{scardovi2009synchronization} (i.e., for the Type I system, with $A_i: = A, \forall i$ and $W_{ij} = b_{ij} I_d$), the matrix condition of Theorem \ref{thm:rank_preserving_matrix} is satisfied and the solutions of such networked control systems possess the dimensional-invariance principle. In contrast, for the Type II coupled systems for linear system synchronization, the condition \eqref{eq:theorem2_condition2} in Theorem \ref{thm:rank_preserving_matrix} would be violated and thus the dimensional-invariance property is not guaranteed.

\begin{figure*}[ht]
\begin{align}  \label{eq:X_expanded_1}
\dot X
 & =  A [x_1, x_2, \ldots, x_n] +  [x_1, x_2, \ldots, x_n] \left[
\begin{array}{cccc}
b_{11} & b_{12} & \dots & b_{1n} \\
b_{21} & b_{22} & \dots & b_{2n} \\
\vdots & \vdots & \ddots & \vdots \\
b_{n1} & b_{n2} & \dots & b_{nn} \\
\end{array}
\right] \nonumber \\
&= [(A + b_{11} I_d)x_1 + b_{21} x_2 + \cdots + b_{n1} x_n, b_{12}x_1 + (A + b_{22}I_d) x_2 + \cdots + b_{n2} x_n, \ldots,   b_{1n}x_1 + b_{2n} x_2 + \cdots + (A+b_{nn}I_d) x_n]
\end{align}
\begin{align}  \label{eq:X_expanded_2}
\dot X &= [\dot x_1, \dot x_2, \ldots, \dot x_n]  = [W_{11} x_1 + W_{12} x_2 + \cdots + W_{1n} x_n,  W_{22} x_2 + W_{21} x_1 + \cdots + W_{2n} x_n, \ldots, W_{n1} x_1 + W_{n2} x_2 + \cdots +  W_{nn} x_n]
\end{align}
\end{figure*}

\begin{table*}[t]
\caption{Coupled dynamical systems that can be described by  \eqref{eq:coupled_linear_wij}}
\begin{small}
\begin{tabular}{l*{4}{c}r}
\hline \hline
  Ref.          & Coupled/networked control systems    & System dynamics equation & Coefficient/coupling term     \\
\hline
\cite{costello2014degree}, \cite{costello2015global}, etc. & Network distributed computation
& $\begin{aligned}[t]
\dot{x}_i  &=\sum_{j\in \mathcal N_i} w_{ij}x_j, \quad i=1,\ldots,n \nonumber \\
\end{aligned} $
 & $w_{ij}$         \\

\hline
\cite{olfati2004consensus}, \cite{ren2007information}, \cite{martin2013continuous}, etc. & Multi-agent consensus
& $\begin{aligned}[t]
\dot{x}_i  &=\sum_{j\in \mathcal N_i} a_{ij}(x_j-x_i), \quad i=1,\ldots,n \nonumber \\
& a_{ij}: \text{weighted adjacency matrix} \nonumber \\ &\text{including undirected/directed graphs,} \nonumber \\ &
\text{static/time-varying/switching topologies} \nonumber \\
\end{aligned}$
& $\begin{aligned}[t]  w_{ij} &= a_{ij}, i \neq j, \nonumber \\  w_{ii} &= -\sum_{j\in \mathcal N_i} a_{ij} \end{aligned} $        \\

\hline
\cite{anderson2014counting}, \cite{krick2009stabilisation},  \cite{sun2016exponential}, etc.& \begin{tabular}{@{}c@{}} Distance-based formation \\ shape control\end{tabular}
& $\begin{aligned}[t]
\dot{x}_i  &=-\sum_{j\in \mathcal N_i} g_{ij}(x_i-x_j), \quad i=1,\ldots,n \nonumber \\
e_{ij}  &= \|x_i-x_j\|^2 - d_{ij}^2,  \nonumber \\
g_{ij}   &= e_{ij}, \text{or}  \,\,\,g_{ij} = g_{ij}(e_{ij}) \nonumber \\
\end{aligned} $
& $\begin{aligned}[t]  w_{ij} &= g_{ij}, i \neq j, \nonumber \\  w_{ii} &= -\sum_{j\in \mathcal N_i} g_{ij} \end{aligned} $  \\

\hline
\cite{chen2016swarm}, \cite{chen2015controllability}, \cite{lin2016necessary}, etc. &   \begin{tabular}{@{}c@{}} Multi-agent coordination/ \\ affine formation/swarming\end{tabular}
& $\begin{aligned}[t]
\dot{x}_i  =-\sum_{j\in \mathcal N_i} u_{ij}(t,x)(x_i-x_j), \quad i=1,\ldots,n \nonumber \\
\end{aligned} $
& $\begin{aligned}[t]  w_{ij} &= u_{ij}, i \neq j, \nonumber \\  w_{ii} &= -\sum_{j\in \mathcal N_i} u_{ij} \end{aligned} $  \\

\hline

\cite{setter2016trust}, \cite{setter2017trust}, etc.& \begin{tabular}{@{}c@{}} {\color{blue}  Multi-agent networks with} \\ {\color{blue}trust-based interactions }\end{tabular}  
& $\begin{aligned}[t]   
\dot{x}_i  &=\sum_{j\in \mathcal N_i} \tau_{ij}(x_j-x_i), \quad i=1,\ldots,n \nonumber \\
\dot \tau_{ij}  &= -\frac{\partial F_{ij}(\|x_i - x_j\|)}{\partial x_j} \dot x_j,  \nonumber \\
\text{or} \,\, \,\, \dot{x}_i  &= \tau_{i}\sum_{j\in \mathcal N_i} (x_j-x_i), \quad i=1,\ldots,n \nonumber \\
\dot \tau_{i}  &= -\sum_{j \in N_i}\frac{\partial F_{ij}(\|x_i - x_j\|)}{\partial x_j} \dot x_j \nonumber \\\nonumber \\  
\end{aligned} $
& $\begin{aligned}[t]  w_{ij} &= \tau_{ij}, i \neq j, \nonumber \\  w_{ii} &= -\sum_{j\in \mathcal N_i} \tau_{ij}     \\ 
\text{or} \,\, \,\, w_{ij} &= \tau_{i}, (i, j) \in \mathcal E, \nonumber \\  
w_{ij} &= 0, (i, j) \notin \mathcal E, \nonumber \\
w_{ii} &= -\sum_{j\in \mathcal N_i} \tau_{i} \end{aligned} $  \\ 
 
\hline
\end{tabular}
\end{small}
  \label{table_wij}
\end{table*}

\begin{table*}[t]
\caption{Coupled dynamical systems that can be described by  \eqref{eq:coupled_linear_Wij}}
\begin{small}
\begin{tabular}{l*{4}{c}r}
\hline \hline
  Ref.          & Coupled/networked control systems    & System dynamics equation & Coefficient/coupling term     \\
\hline
\cite{collinear2017ACC}  & Collinear dynamical systems
& $\begin{aligned}[t]
\dot{x}_i  &=\sum_{j=1}^n A_{ij}x_j, \quad i=1,\ldots,n \nonumber \\
\end{aligned} $
 & $W_{ij} = A_{ij}$         \\

\hline
\cite{tuna2016synchronization}, \cite{trinh2017theory}, etc.   & Matrix-weighted consensus
& $\begin{aligned}[t]
\dot{x}_i  &=\sum_{j\in \mathcal N_i} Q_{ij}(x_j-x_i), \quad i=1,\ldots,n \nonumber \\
& Q_{ij} \in \mathbb{R}^{d \times d}   \nonumber \\
\end{aligned}$
& $\begin{aligned}[t]  W_{ij} &= Q_{ij}, i \neq j, \nonumber \\  W_{ii} &= -\sum_{j\in \mathcal N_i} Q_{ij} \end{aligned} $        \\

\hline
\cite{scardovi2009synchronization}, \cite{li2010consensus}, \cite{wieland2011internal}, etc.  &   \begin{tabular}{@{}c@{}} Synchronization control  \\ for networked linear systems \end{tabular}
& $\begin{aligned}[t]
&  \text{Type I:} \,\,\, \dot{x}_i   =A_i x_i -\sum_{j\in \mathcal N_i} b_{ij}(t)(x_i-x_j)  \nonumber \\
 & \quad i=1,\ldots,n,  A_i \in \mathbb{R}^{d \times d}, b_{ij}(t) \in \mathbb{R}  \nonumber \\
 \hline
&  \text{Type II:} \,\,\, \dot{x}_i   =A_i x_i -\sum_{j\in \mathcal N_i} B_{ij}(t)(x_i-x_j)  \nonumber \\
&  \quad i=1,\ldots,n,  \,\,\,\,\, A_i, B_{ij}(t) \in \mathbb{R}^{d \times d} \nonumber \\
\end{aligned} $
& $\begin{aligned}[t]  W_{ij} &= b_{ij}I_d, i \neq j, \nonumber \\  W_{ii} &= A_i -\sum_{j\in \mathcal N_i} b_{ij}I_d \nonumber \\
\hline
 W_{ij} &= B_{ij}, i \neq j, \nonumber \\  W_{ii} &= A_i-\sum_{j\in \mathcal N_i} B_{ij} \end{aligned} $  \\

\hline
\end{tabular}
\end{small}
  \label{table_Wij}
\end{table*}

\section*{ACKLONOGEMENT}
We thank the Associate Editor and the  anonymous reviewers for their constructive suggestions which help improve the presentation and clarity of this paper. The first author would like to thank the late Professor Uwe Helmke for his valuable and insightful discussions   on this topic.  This paper is dedicated to the memory of him. 

\bibliography{2017_Dimension_invariance}
\bibliographystyle{ieeetr}

\end{document}